\newtheorem{thm}{Theorem}
\theoremstyle{definition}
\newtheorem{defn}{Definition}[section]
\theoremstyle{remark}
\newtheorem{rem}{Remark}
\newcommand{\ii}{\mathrm{i}}
\newcommand{\dd}{\mathrm{d}}
\newcommand{\var}{\mathrm{Var}}
\begin{document}

\title[Microscopic Dynamics of Nonlinear Fokker-Planck Equations]{Microscopic Dynamics of Nonlinear Fokker-Planck Equations}

\author{Leonardo Santos}
\email{leo\_vieira@usp.br}
\affiliation{ 
Departamento de F\'{i}sica-Matem\'{a}tica, Instituto de F\'{i}sica,
Universidade de S\~{a}o Paulo, S\~{a}o Paulo 05508-090, S\~{a}o Paulo, Brazil
}%

\date{\today}

\begin{abstract}
We propose a new approach to describe the effective microscopic dynamics of (power-law) nonlinear Fokker-Planck equations. Our formalism is based on a nonextensive generalization of the Wiener process. This allow us to obtain, in addition to significant physical insights, several analytical results with great simplicity. Indeed, we obtain analytical solutions for a nonextensive version of Brownian free-particle and Ornstein-Uhlenbeck process, and explain anomalous diffusive behaviours in terms of memory effects in a nonextensive generalization of Gaussian white noise. Finally, we apply the develop formalism to model thermal noise in electric circuits.
\end{abstract}

\maketitle

Fokker-Planck equations (FPEs) constitute a powerful tool in the study of nonequilibrium phenomena \cite{Risken96}. Since the seminal contribution by Einstein on the Brownian motion \cite{Einstein}, linear equations of this class play a fundamental role in the study of normal diffusion processes and in the investigation of nonequilibrium properties in general. It is a well-known fact, however, that diverse phenomena in complex systems are associated to an anomalous diffusive behavior that cannot be properly described by linear FPEs. Indeed, for various applications including diffusion in porous media \cite{Aronson}, type-II superconductors \cite{RNC12}, granular media \cite{CRSA15}, and self-gravitating systems \cite{Shiino03,Chavanis03}, a nonlinear FPE \cite{Frank05} appears to be more suitable.

In the present work, we consider nonlinear FPEs of the form
\begin{eqnarray}\label{eq:1}
\frac{\partial p}{\partial t}=-\mathrm{Div}(\mathbf{F}p)+D\Delta p^\nu,
\end{eqnarray}
where $p(\mathbf{x},t)$ is a time-dependent density, $\mathbf{x}\in\mathbb{R}^d$ represents a point in an adequate configuration space, $D>0$ is a diffusion constant, $\mathbf{F}(\mathbf{x},t)$ is a drift force, and $\nu:=2-q$ is a real parameter characterizing the nonlinearity appearing in the diffusion term. The power-law nonlinear FPEs (\ref{eq:1}) have a deep connection with the nonextensive entropy $S_q$ \cite{PP95,TB96,Tsallis88}, and for that they have an almost paradigmatic role in nonextensive statistical mechanics \cite{Tsallisbook}. Moreover, such equations have properties that are relevant and/or interesting from both a physical and mathematical point of view. For instance, they admit exact analytical solutions of the $q$-Gaussian form that can be interpreted as maximum entropy densities obtainable from the optimization (under appropriate constraints) of $S_q$; they have been studied in the context of entropy production \cite{RCN15,casasthese}; with different drift forces \cite{WPT16,PWCNT18}; they obey an $H$-theorem formulated in terms of a free-energy-like quantity \cite{SNC07,SRT16}, and so on. In particular, we highlight an experimental work on granular media \cite{CRSA15}, that verified within great precision ($2\%$ error) the Tsallis and Bukman's \cite{TB96} prediction 19 years after the original proposal.

Nonextensive statistical mechanics is a (possible) generalization of the Boltzmann-Gibbs theory that aims to extend its domain of applicability to phenomena with long-ranged interactions and memories, and (multi)fractal configuration spaces. This theory has been satisfactorily applied to handle a large number of physical phenomena \footnote{A selection of applications and verifications of nonextensive statistical mechanics may be found at:
\href{http://tsallis.cat.cbpf.br/experimental.htm}{http://tsallis.cat.cbpf.br/experimental.htm}}. Its development provides, in addition to several remarkable physical insights, a powerful mathematical formalism that has been extensively studied in recent years.  One of the main goals of such a formalism is to generalize mathematical concepts and tools in order to simplify the formal treatment of problems arising from nonextensive systems. Remarkable examples are the $q$-generalizations of the usual transcendent functions (exponential, sine, cosine, \dots) \cite{Borges98,Yamano02}; the $q$-Fourier transform \cite{UTS08}; a generalization of the central limit theorem \cite{UTS08}; $q$-Dirac's delta functions \cite{JT10,ST17}; and so on.

An interesting problem in the study of nonequilibrium systems is the reconstruction of the microscopic dynamics from FPEs. For the linear case ($\nu=q=1$), this problem is simple and the associated microscopic dynamics is governed by the usual Langevin equation \cite{TO15}. For the nonlinear case, however, this problem is not trivial. In this regard, Borland \cite{Borland98} proposed a phenomenological model in which the microscopic dynamics successfully reproduce (\ref{eq:1}). The equations of motion, however, depend on the solution of the nonlinear FPE itself. In other words, there exists a coupling between the macroscopic description and the microscopic one.

In this paper we propose a new approach to the microscopic dynamics of nonlinear FPEs (\ref{eq:1}). For that, we conjecture the existence of a stochastic process $\mathbf{P}_t^q$ in such a way that the solution of the following stochastic differential equation:
\begin{eqnarray}\label{eq:2}
\dd \mathbf{X}_t=\mathbf{F}(\mathbf{X}_t,t)\dd t+(2D)^\frac{1}{2+d(1-q)} \dd \mathbf{P}_t^{q},
\end{eqnarray}
is a stochastic process whose probability distribution $p$ is a solution of (\ref{eq:1}). In other words, Eq. (\ref{eq:2}) reproduces the microscopic dynamics of nonlinear FPEs (\ref{eq:1}). The process $\mathbf{P}^q_t$ may be understood as a nonextensive generalization of the Wiener process. This new formalism completely removes the dependence between microscopic dynamics and nonlinear FPEs. As will be shown, several physically relevant and mathematically interesting insights follow from this result. Namely, we compute analytical solutions for (\ref{eq:1}) with $\mathbf{F}=\mathbf{0}$ and $\mathbf{F}=-\gamma \mathbf{x}$ in a very simple and economical way; we define a nonextensive analogue of the Gaussian white noise; and finally we apply the developed formalism for modeling thermal noises in electric circuits.

We have organized the paper as follows: in section \ref{sec:II} we review the mathematical framework of nonextensive statistical mechanics; in section \ref{sec:III} we present our results; and we finish the paper with a discussion in section \ref{sec:IV}.

\section{Nonextensive Formalism}\label{sec:II}

Herein we will briefly review the mathematical framework of nonextensive statistical mechanics (for a more detailed discussion see Chapter 3 of Ref. \cite{Tsallisbook}). The main goal is to introduce three concepts: $q$-Gaussian distributions, $q$-Fourier transform, and $q$-independence.

The starting point of our discussion is the \textit{$q$-exponential function},
\begin{eqnarray}\label{eq:3}
e_q^x := \begin{cases}
[1+(1-q)x]^{\frac{1}{1-q}} \hspace{0.5cm} &\textrm{if } 1+(1-q)x\geq 0 \\ 0  &\textrm{otherwise}
\end{cases},
\end{eqnarray}
where $q,x\in \mathbb{R}$. For a pure imaginary variable, $e_q^{\ii x}$ can be defined to be the principal value of $[1+(1-q)\ii x]^{\frac{1}{1-q}}$. Since $e_q^x\to e^x$ when $q\to 1$, the $q$-exponential function may be understood as a generalization of the usual exponential function $e^x$. The inverse of $e_q^x$, the $q$-logarithm $\log_q(x)$, reads
\begin{eqnarray}\label{eq:4}
\log_q(x):=\frac{x^{1-q}-1}{1-q},
\end{eqnarray}
for $x>0$. 

The $q$-exponential function and its inverse have several useful and interesting properties \cite{Yamano02}. Unfortunately, a very useful property of the exponential, $e^{a+b}=e^a e^b$, does not hold for $q$-exponentials with $q\neq 1$. In order to get around this difficulty, we may define the operations $\oplus_q$ and $\otimes_q$,
\begin{eqnarray}\label{eq:5}
a\oplus_q b:= a+b+(1-q)ab
\end{eqnarray}
and 
\begin{eqnarray}\label{eq:6}
a\otimes_q b :=\begin{cases}[a^{1-q}+b^{1-q}-1] \hspace{0.5cm} &\textrm{if } a^{1-q}+b^{1-q}-1\geq 0 \\ 0 &\textrm{otherwise}\end{cases},
\end{eqnarray}
in such a way that
\begin{eqnarray}\label{eq:7}
e_q^a\otimes_q e_q^b=e_q^{a+b} \hspace{1cm} \textrm{and} \hspace{1cm} e_q^{a\oplus_q b}=e_q^a e_q^b.
\end{eqnarray}
As a consequence of the properties above, for the $q$-logarithm we have
\begin{eqnarray}\label{eq:8}
\log_q(a\otimes_q b)=\log_q a + \log_q b.
\end{eqnarray}

Another important concept in our discussion is that of a \textit{$q$-Gaussian distribution}. As the name suggests, these are $q$-generalizations of Gaussian distributions. We say that a random variable $\mathbf{X}$ in $\mathbb{R}^d$ is $q$-Gaussian if its probability distribution is
\begin{eqnarray}\label{eq:9}
G_q(\mathbf{x})=\left(\frac{d}{\mathsf{2}_{qd}\pi_{qd} \sigma^2}\right)^{\frac{d}{2}}\exp_{q}\left(-\frac{d}{\mathsf{2}_{qd}\sigma^2}|\mathbf{x}-\boldsymbol{\mu}|^2\right).
\end{eqnarray}
The constants $\mathsf{2}_{qd}$ and $\pi_{qd}$ are defined as
\begin{eqnarray}\label{eq:10}
\mathsf{2}_{qd}=\begin{cases}2-(d+2)(q-1) \hspace{0.5cm} &\textrm{if } q>1\\ 2(2-q)+d(1-q) &\textrm{otherwise} \end{cases},
\end{eqnarray}
and
\begin{eqnarray}\label{eq:11}
\pi_{qd}=\frac{\pi |1-q|^{-1}}{\left[\Gamma\left(\frac{d}{2}\right)\right]^{\frac{2}{d}}}\begin{cases}\left[\textrm{B}\left(\frac{d}{2};\frac{2+d(1-q)}{2(q-1)}\right)\right]^{\frac{2}{d}} \hspace{0.5cm} &\textrm{if } q>1 \\ \left[\textrm{B}\left(\frac{d}{2};\frac{2-q}{1-q}\right)\right]^{\frac{2}{d}} &\textrm{otherwise}\end{cases},
\end{eqnarray}
where $\mathrm{B}$ and $\Gamma$ denote the Beta and Gamma functions respectively. Defined in this way, it is possible to demonstrate that: (\textit{i}) $G_q$ is normalized for all $q<(2+d)/d$; (\textit{ii})
$\boldsymbol{\mu}$ is the $q$-expectation value of $\mathbf{X}$, \begin{eqnarray}\label{eq:12}
\mathbb{E}_q[\mathbf{X}]:=\left(\int_{\mathbb{R}^d}[p(\mathbf{x})]^q\dd \mathbf{x}\right)^{-1}\int_{\mathbb{R}^d} \mathbf{x}[p(\mathbf{x})]^q\dd \mathbf{x};
\end{eqnarray}
(\textit{iii}) $\sigma^2$ is the variance of $\mathbf{X}$,
\begin{align}\label{eq:13}
\var[\mathbf{X}]:=&\mathbb{E}[\mathbf{X}^2]-(\mathbb{E}[\mathbf{X}])^2 \nonumber \\ =&\int_{\mathbb{R}^d} |\mathbf{x}|^2p(\mathbf{x})\dd \mathbf{x}-\left|\int_{\mathbb{R}^d} \mathbf{x} p(\mathbf{x})\dd \mathbf{x}\right|^2,
\end{align}
which is finite whenever $q<(d+4)/(d+2)$.

The $q$-Gaussian distributions are deeply related with the nonextensive entropy $S_q$. We define the $q$-entropy of a random variable $\mathbf{X}$ taking values $\mathbf{x}\in\mathbb{R}^d$ with probability $p(\mathbf{x})$ as the nonextensive generalization of Shannon's entropy, \textit{i.e.}
\begin{eqnarray}\label{eq:14}
S_q[\mathbf{X}]:=\int_{\mathrm{Supp}p} p(\mathbf{x})\log_q\left(\frac{1}{p(\mathbf{x})}\right) \dd \mathbf{x}.
\end{eqnarray}
It is possible to demonstrate that $S_q$ is maximized (under appropriated constrains) for a $q$-Gaussian random variable. In addition, $S_q$ is not additive but $q$-additive. 

An useful property of the Fourier transform is that it maps a Gaussian distribution with variance $\sigma^2$ into another Gaussian distribution with variance $\propto 1/\sigma^2$ \cite{TO15}. In general, the Fourier transform of a $q$-Gaussian does not correspond to another $q$-Gaussian. We define the \textit{$q$-Fourier transform} as an integral transform that maps $q$-Gaussians into $\tilde{q}$-Gaussians (in general we do not require $\tilde{q}=q$) \cite{UTS08}. The $q$-Fourier transform of a non-negative measurable function $f$, denoted by $\hat{f}_q$, is defined, for $1\leq q<(2+d)/d$, as
\begin{eqnarray}\label{eq:15}
\hat{f}_q(\mathbf{k})=\int_{\mathrm{Supp} f} f(\mathbf{x})\otimes_q e_q^{\ii\mathbf{x}\cdot \mathbf{k}}\dd \mathbf{x}.
\end{eqnarray}
If $\mathbf{X}$ is a random variable, we define the \textit{$q$-characteristic function of $\mathbf{X}$}, denoted by $F_q[\mathbf{X}]$, as the $q$-Fourier transform of its probability distribution.  

The $q$-Fourier transform is the nonextensive generalization of the usual Fourier transform. However, this generalization does not have all the good properties of its extensive counterpart. The main problem refers to its invertibility. Indeed, it is not invertible in the full space of probability density functions for $q>1$ \cite{Hilhorst10}. In the space of $q$-Gaussian distributions, however, the $q$-Fourier transform defines an injective map. In fact, as a straightforward generalization of the Lemma 2.5 of Ref. \cite{UTS08}, one may verify that if $q\in[1,(d+4)/(d+2))$, then the $q$-Fourier transform maps a $q$-Gaussian into a $\tilde{q}$-Gaussian (up to normalization), with $\tilde{q}=[2+(d-2)(1-q)]/[2+d(1-q)]$. More precisely, if $\mathbf{X}\sim \mathcal{N}_q(\mathbf{0},\sigma^2)$ in $\mathbb{R}^d$, then
\begin{align}\label{eq:16}
F_q[\mathbf{X}]=\exp_{\tilde{q}}\left[-\frac{\mathsf{2}_{qd}[2+d(1-q)]}{8(\mathsf{2}_{qd}\pi_{qd})^{d(q-1)}}\left(\frac{d}{\sigma^2}\right)^{d(q-1)-1}|\mathbf{k}|^2\right].
\end{align}

Finally, the \textit{$q$-independence} consists of a mathematical property associated with two or more random variables. We say that two random variables, $\mathbf{X}$ and $\mathbf{Y}$, are $q$-independent if the $q$-characteristic function of $\mathbf{Z}:=\mathbf{X}+\mathbf{Y}$ can be written as
\begin{eqnarray}\label{eq:17}
F_q[\mathbf{Z}](\mathbf{k})=F_{\tilde{q}}[\mathbf{X}](\mathbf{k}) \otimes_{\tilde{q}} F_{\tilde{q}}[\mathbf{Y}](\mathbf{k}),
\end{eqnarray}
where $\tilde{q}=[2+(d-2)(1-q)]/[2+d(1-q)]$. It is easy to verify that this notion of independence reduces to the usual one in the extensive limit. In fact, from (\ref{eq:17}), when $q\to 1$ we have that the characteristic function of $\mathbf{Z}$ is the product between the characteristic function of $\mathbf{X}$ and $\mathbf{Y}$, meaning that they are statistically independent \cite{TO15}.

\section{Microscopic Dynamics of Nonlinear FPEs}\label{sec:III}

In this section we will discuss a possible way to describe the microscopic dynamics of nonlinear FPEs. As already mentioned, for linear FPEs ($\nu=q=1$) the microscopic dynamics is described by the usual Langevin equation. Using stochastic calculus notation, this equation reads
\begin{eqnarray}\label{eq:18}
\dd \mathbf{X}_t=\mathbf{F}(\mathbf{X}_t,t)\dd t+\sqrt{2D}\dd \mathbf{W}_t,
\end{eqnarray}
where $\mathbf{W}_t$ denotes the Wiener process in $\mathbb{R}^d$. The connection between (\ref{eq:18}) and linear FPEs may be found using It\^{o}'s lemma (see, for instance, Refs. \cite{Gardiner,casasthese} for detailed discussions).

It is a remarkable feature of the microscopic dynamics of linear FPE has such a simple form. In particular, we emphasize that the equation of motion (\ref{eq:18}) has no explicit dependence on the density $p$, and therefore can be solved without any reference to the corresponding FPE. For the nonlinear case, however, if one tries to describe the microscopic dynamics in terms of $\mathbf{W}_t$, the following equation is obtained \cite{Borland98}
\begin{eqnarray}\label{eq:19}
\dd \mathbf{X}_t=\mathbf{F}(\mathbf{X}_t,t)\dd t+\sqrt{2D}[p(\mathbf{X}_t,t)]^{\frac{\nu-1}{2}} \dd \mathbf{W}_t.
\end{eqnarray}
This shows that the Wiener process is not adequate to describe the microscopic dynamics of the nonlinear FPEs. We may attribute this inadequacy to the fact that $\mathbf{W}_t$ is Gaussian and has independent increments, \textit{i.e.} for every $t>0$ the future increments $\mathbf{W}_{t+h}-\mathbf{W}_{t}$, $h\geq 0$, are statistically independent of the past values $\mathbf{W}_s$, $s\leq t$ \cite{Gardiner}. It is currently known, however, that the solutions of (\ref{eq:1}) are $q$-Gaussians \cite{TB96}, and their connection with nonextensive entropy $S_q$ makes us intuit that the assumption of independent increments should not be adequate.

Based on nonextensive formalism in the previous section, we define a $q$-generalization of the Wiener process as follows.

\begin{defn}\label{defn:1} The $d$-dimensional nonextensive Wiener process $\mathbf{P}_t^q$ is a stochastic process in $\mathbb{R}^d$ defined by the following properties:
\begin{enumerate}
\item $\mathbf{P}_0^q=\mathbf{0}$ almost surely;
\item The paths $t\mapsto\mathbf{P}_{t}^q$ are continuous with probability $1$;
\item $\mathbf{P}_t^q$ has $q$-independent increments, \textit{i.e.} $\mathbf{P}_{t}^q-\mathbf{P}_{t'}^q$ is statistically independent of $\{\mathbf{P}^q_s:s\leq t'\}$ for any $0\leq t'\leq t$; 
\item Given $t$ and $t'$, $0\leq t'<t$, $\mathbf{P}_{t}^q-\mathbf{P}_{t'}^q$ is a $q$-Gaussian random variable, with
\begin{eqnarray}\label{eq:20}
\mathbb{E}_q\left[\mathbf{P}_{t}^q-\mathbf{P}_{t'}^q\right]=\mathbf{0},
\end{eqnarray}
and
\begin{eqnarray}\label{eq:21}
\var\left[\mathbf{P}_{t}^q-\mathbf{P}_{t'}^q\right]=\varepsilon_{qd} (t-t')^{\frac{2}{2+d(1-q)}},
\end{eqnarray}
where
\begin{eqnarray}\label{eq:22}
\varepsilon_{qd}=\frac{d}{\mathsf{2}_{qd}\pi_{qd}}\left[\pi_{qd}(2-q)(2+d(1-q))\right]^{\frac{2}{2+d(1-q)}}.
\end{eqnarray}
\end{enumerate}
\end{defn}

In what follows, we will consider that $\mathbf{P}_t^q$ is a well-defined stochastic process for all $q<(d+4)/(d+2)$. It should be stressed that $\mathbf{P}_t^q$ is the usual Wiener process in the limit $q\to 1$. The main differences between $\mathbf{W}_t$ and $\mathbf{P}_t^q$ are the following: (\textit{i}) $\mathbf{W}_t$ is Gaussian, while $\mathbf{P}_t^q$ is $q$-Gaussian; (\textit{ii}) the increments of $\mathbf{W}_t$ are independent, while the increments of $\mathbf{P}_t^q$ are $q$-independent; (\textit{iii}) the variance of $\mathbf{W}_t$ is $\var[\mathbf{W}_t]=t$, while $\var[\mathbf{P}_t^q]\propto t^\alpha$, $\alpha=2/[2+d(1-q)]$.

Once the nonextensive analogue of the Wiener process is well-defined, we consider the following stochastic differential equation:
\begin{eqnarray}\label{eq:23}
\dd \mathbf{X}_t=\mathbf{F}(\mathbf{X}_t,t) \dd t+(2D)^{\frac{1}{2+d(1-q)}} \dd \mathbf{P}_t^q,
\end{eqnarray}
associated to the nonlinear FPEs (\ref{eq:1}). Since $\mathbf{P}_t^q\to \mathbf{W}_t$ when $q\to 1$, Eq. (\ref{eq:23}) is a generalization of Eq. (\ref{eq:18}). We conjecture that the microscopic dynamics of the nonlinear FPE (\ref{eq:1}) is governed by Eq. (\ref{eq:23}). In order to demonstrate that such a conjecture is reasonable, we will consider some particular cases whose results are known.

We define a nonextensive analogue of the stochastic integral inspired by the definition of It\^{o}'s integral (see, for instance, Ref. \cite{Gardiner}) as follows.

\begin{defn}\label{defn:2}
Let $\Pi=\{t_0,t_0+\Delta,\dots,t_0+n\Delta=t\}$ ($\Delta=(t-t_0)/n$) be a homogeneous partition of the interval $[t_0,t]$. The $q$-Itô's integral of a real function $f(\mathbf{X}_t,t)$ ($\mathbf{X}_t$ is a stochastic process in $\mathbb{R}^d$), is defined as
\begin{eqnarray}\label{eq:24}
\mathbf{I}_t=\int_{t_0}^{t}f(\mathbf{X}_{t'},t')\dd \mathbf{P}_{t'}^q:=\textrm{\small{${\textrm{ms-lim}}\atop{n\to\infty}$}}\sum_{i=1}^n f(\mathbf{X}_{t_i},t_i) \Delta\mathbf{P}_{t_i}^q,
\end{eqnarray}
where ms-lim denotes the mean square limit
\begin{eqnarray}
\textrm{\small{${\textrm{ms-lim}}\atop{n\to\infty}$}} \mathbf{X}_n=\mathbf{A} \iff \lim_{n\to\infty} \mathbb{E}[|\mathbf{X}_n-\mathbf{A}|^2]=0.\nonumber
\end{eqnarray}
\end{defn}

The following theorem states an useful formula to compute $q$-It\^{o} integrals. In particular, it will be a powerful tool to solve equation (\ref{eq:23}).

\begin{rem}
In what follows, $\mathsf{L}^\mathrm{p}([0,t])$ denotes the set of real functions such that
\begin{eqnarray*}
||f||_{\textrm{$\displaystyle{\mathsf{L^p}([0,t])}$}}=\left(\int_0^t |f(\tau)|^\mathsf{p}\dd \tau\right)^{\frac{1}{\mathsf{p}}}<\infty
\end{eqnarray*}
\end{rem}

\begin{thm}\label{thm:1}
Let $f\in\mathsf{L^p}([0,t])$ be a real function, and let $\mathsf{p}=2+d(1-q)$, $q<\min\{(1+d)/d,(d+2)/(d+4)\}$. The $q$-It\^{o} integral of $f$, 
\begin{eqnarray}\label{eq:25}
\mathbf{I}_t=\int_0^t f(t')\dd {\mathbf{P}}_{t'}^q,
\end{eqnarray}
is a $q$-Gaussian stochastic process with the following properties:
\begin{eqnarray}\label{eq:26}
\mathbb{E}_q(\mathbf{I}_t)=\mathbf{0}, \hspace{0.5cm} \textrm{and} \hspace{0.5cm} \var(\mathbf{I}_t)=\varepsilon_{qd}||f||_{\textrm{$\displaystyle{\mathsf{L^p}([0,t])}$}}^2.
\end{eqnarray}
That is, $\mathbf{I}_t\sim\mathcal{N}_q\left(\mathbf{0},\varepsilon_{qd}||f||_{\textrm{$\displaystyle{\mathsf{L^p}([0,t])}$}}^{2}\right)$.\\
\end{thm}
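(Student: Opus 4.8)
The plan is to prove the three assertions by analysing the defining Riemann--It\^{o} sums $\mathbf{S}_n=\sum_{i=1}^n f(t_i)\,\Delta\mathbf{P}_{t_i}^q$, where $\Delta\mathbf{P}_{t_i}^q=\mathbf{P}_{t_i}^q-\mathbf{P}_{t_{i-1}}^q$, and then passing to the mean-square limit. First I would record two elementary reductions. By property~4 of Definition~\ref{defn:1} each increment satisfies $\Delta\mathbf{P}_{t_i}^q\sim\mathcal{N}_q(\mathbf{0},\varepsilon_{qd}\Delta^{2/\mathsf{p}})$; and since multiplying a centred $q$-Gaussian density by a scalar $c$ only rescales the argument of its $\exp_q$, it sends $\mathcal{N}_q(\mathbf{0},\sigma^2)$ to $\mathcal{N}_q(\mathbf{0},c^2\sigma^2)$ with $q$ unchanged. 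Hence each summand $f(t_i)\Delta\mathbf{P}_{t_i}^q$ is a centred $q$-Gaussian with variance $\sigma_i^2=\varepsilon_{qd}f(t_i)^2\Delta^{2/\mathsf{p}}$, and by property~3 the increments, and therefore the summands, are mutually $q$-independent. Thus $\mathbf{S}_n$ is a sum of $q$-independent centred $q$-Gaussians.

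The technical heart is a combination rule, which I would isolate as a lemma: mutually $q$-independent centred $q$-Gaussians $\mathbf{Y}_i\sim\mathcal{N}_q(\mathbf{0},\sigma_i^2)$ sum to $\mathcal{N}_q(\mathbf{0},\sigma^2)$ with $\sigma^{\mathsf{p}}=\sum_i\sigma_i^{\mathsf{p}}$, i.e.\ the standard deviations add in $\mathsf{p}$-norm. The route is through $q$-characteristic functions: the $q$-independence identity (\ref{eq:17}), applied inductively, rewrites $F_q\bigl[\sum_i\mathbf{Y}_i\bigr]$ as an iterated $\otimes_{\tilde q}$-product of the transforms of the $\mathbf{Y}_i$, which I would reduce---using the transform of a $q$-Gaussian (\ref{eq:16}) together with the composition law $e_{\tilde q}^{a}\otimes_{\tilde q}e_{\tilde q}^{b}=e_{\tilde q}^{a+b}$ of (\ref{eq:7})---to the $q$-characteristic function of a single centred $q$-Gaussian. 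Injectivity of the $q$-Fourier transform on $q$-Gaussians then identifies the sum as $q$-Gaussian, and reading off the coefficient of $|\mathbf{k}|^2$, which carries $\sigma^2$ with the exponent $d(q-1)-1=-(\mathsf{p}-1)$ from (\ref{eq:16}), yields the $\mathsf{p}$-norm rule. As a check, for $f\equiv1$ the sum telescopes to $\mathbf{P}_t^q$ and the rule returns $\var[\mathbf{P}_t^q]=\varepsilon_{qd}t^{2/\mathsf{p}}$ of (\ref{eq:21}), confirming that the exponent $2/\mathsf{p}$ in Definition~\ref{defn:1} is precisely the value making the rule self-consistent. Substituting $\sigma_i^{\mathsf{p}}=\varepsilon_{qd}^{\mathsf{p}/2}|f(t_i)|^{\mathsf{p}}\Delta$ then gives $\var(\mathbf{S}_n)=\varepsilon_{qd}\bigl(\sum_i|f(t_i)|^{\mathsf{p}}\Delta\bigr)^{2/\mathsf{p}}$.

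It then remains to take $n\to\infty$. The bracket $\sum_i|f(t_i)|^{\mathsf{p}}\Delta$ is a Riemann sum for $\int_0^t|f|^{\mathsf{p}}\,\dd\tau=\|f\|_{\mathsf{L^p}([0,t])}^{\mathsf{p}}$, so $\var(\mathbf{S}_n)\to\varepsilon_{qd}\|f\|_{\mathsf{L^p}([0,t])}^2$, the hypothesis $f\in\mathsf{L^p}$ with $\mathsf{p}=2+d(1-q)>1$ being exactly what keeps this limit finite. To secure existence of $\mathbf{I}_t$ I would show the sums are Cauchy in mean square by applying the same combination rule to the difference of two partitions and bounding the resulting $\mathsf{p}$-norms by the $\mathsf{L^p}$-norm of $f$. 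Because the $q$-Gaussians of fixed $q$ form a one-parameter scale family in $\sigma^2$ and $\var(\mathbf{S}_n)$ converges, the limiting law is $\mathcal{N}_q(\mathbf{0},\varepsilon_{qd}\|f\|_{\mathsf{L^p}([0,t])}^2)$; the two statements in (\ref{eq:26}) then follow from properties (\textit{ii}) and (\textit{iii}) of the $q$-Gaussian, which identify the vanishing $q$-mean $\boldsymbol{\mu}=\mathbf{0}$ and the variance parameter.

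I expect the main obstacle to be the combination lemma itself. The $q$-independence relation (\ref{eq:17}) is phrased through $F_{\tilde q}[\mathbf{X}]$ while the explicit Gaussian transform (\ref{eq:16}) is written for $F_q$, so the delicate point is to reconcile the two $q$-indices and chase the constants in (\ref{eq:10})--(\ref{eq:11}) and (\ref{eq:22}) carefully enough that the coefficient emerging in front of $\|f\|_{\mathsf{L^p}([0,t])}^2$ is \emph{exactly} $\varepsilon_{qd}$ and not merely proportional to it; the self-consistency check for $f\equiv1$ is what pins this down. A secondary difficulty is justifying that the mean-square limit commutes with the $q$-Fourier transform---that is, that $q$-Gaussianity and the variance survive the limit---which needs continuity of $F_q$ along the convergent scale family together with the ms-Cauchy estimate above. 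The remaining ingredients (scaling of $q$-Gaussians and Riemann-sum convergence) are routine.
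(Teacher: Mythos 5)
Your overall strategy is the one the paper itself uses in Appendix A: form the partial sums $\mathbf{S}_N=\sum_k f(t_k)\,\Delta\mathbf{P}_{t_k}^q$, invoke $q$-independence (\ref{eq:17}) to write the $q$-characteristic function of the sum as an iterated $\otimes_{\tilde q}$-product, combine the factors using $e_{\tilde q}^{a}\otimes_{\tilde q}e_{\tilde q}^{b}=e_{\tilde q}^{a+b}$ together with the explicit transform (\ref{eq:16}), read off the variance from the coefficient of $|\mathbf{k}|^2$, and pass to the Riemann-sum limit. So the architecture is fine and matches the paper.

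The gap is in your combination lemma, and it is not the constant-chasing you anticipate but an exponent mismatch. From (\ref{eq:16}) the coefficient of $|\mathbf{k}|^2$ scales as $(\sigma^2)^{1-d(q-1)}=(\sigma^2)^{\mathsf{p}-1}$, so additivity of the arguments of $\exp_{\tilde q}$ under $\otimes_{\tilde q}$ yields $(\sigma^2)^{\mathsf{p}-1}=\sum_i(\sigma_i^2)^{\mathsf{p}-1}$, whereas the rule you state and then use is $(\sigma^2)^{\mathsf{p}/2}=\sum_i(\sigma_i^2)^{\mathsf{p}/2}$; the exponents $\mathsf{p}-1$ and $\mathsf{p}/2$ agree only when $\mathsf{p}=2$, i.e.\ $q=1$. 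Your own telescoping check for $f\equiv1$ is precisely where this surfaces: applying the rule that actually follows from (\ref{eq:16})--(\ref{eq:17}) to $N$ equal increments of $[0,t]$, each with variance $\varepsilon_{qd}(t/N)^{2/\mathsf{p}}$ by (\ref{eq:21}), gives $\sigma^2=\varepsilon_{qd}\,t^{2/\mathsf{p}}N^{1/(\mathsf{p}-1)-2/\mathsf{p}}$, which depends on the partition unless $q=1$. In other words, the rule derivable from the cited ingredients fails your consistency check, and the rule that passes it is not derivable from them --- Definition \ref{defn:1} and Eqs.\ (\ref{eq:16})--(\ref{eq:17}) are mutually inconsistent for $q\neq1$, so no care with the constants in (\ref{eq:10})--(\ref{eq:11}) and (\ref{eq:22}) can close this. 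For what it is worth, the paper's own proof meets the identical obstruction: its Eq.\ (A6) is the $(\sigma^2)^{\mathsf{p}-1}$-additive rule, and the final chain of equalities converts it into the $\mathsf{L}^{\mathsf{p}}$-norm only by replacing $\sum_k a_k^{\beta}$ with $\bigl(\sum_k a_k\bigr)^{\beta}$ for $\beta=2(\mathsf{p}-1)/\mathsf{p}\neq1$, after which the true limit of (A6) would be $0$ or $+\infty$ for $q\neq1$. A repair would require either changing the variance scaling in Definition \ref{defn:1} to $t^{1/(1+d(1-q))}$ (so that $(\sigma^2)^{\mathsf{p}-1}$ is additive over disjoint increments) or a different transform law; as stated, the theorem's variance formula cannot be reached by this route, and your secondary worries (ms-Cauchy estimates, continuity of $F_q$ under the limit) are moot until this is resolved.
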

\begin{proof}
The proof of this result is present in appendix \ref{Appendix:A}.
\end{proof}

In what follows we will apply the developed formalism to some selected problems.

\subsection*{Free Particle: $\mathbf{F}=\mathbf{0}$}

First, let us consider the simplest possible case, in which there is no drift force, \textit{i.e.} $\mathbf{F}=\mathbf{0}$. For this considered case, the associated nonlinear FPE (\ref{eq:1}) is a nonlinear power-law version of the heat equation, that is
\begin{eqnarray}\label{eq:27}
\frac{\partial p}{\partial t}=D\Delta p^\nu.
\end{eqnarray}
The microscopic dynamics, in turn, is governed by the following stochastic differential equation:
\begin{eqnarray}\label{eq:28}
\dd \mathbf{X}_t=(2D)^{\frac{1}{2+d(1-q)}}\dd \mathbf{P}_t^q.
\end{eqnarray}

The solution of Eq. (\ref{eq:27}) may be directly found by the Pattle-Barenblatt ansatz \cite{Frank05,handbooknonlinear}, which is, however, a laborious work. On the other hand, to solve the stochastic differential equation (\ref{eq:28}) is quite simple. In fact, integrating both sides of (\ref{eq:28}) and applying the theorem \ref{thm:1}, it follows that
\begin{eqnarray}\label{eq:29}
\mathbf{X}_t\sim \mathcal{N}_q\left(\mathbf{0},\varepsilon_{qd}(2Dt)^{\frac{2}{2+d(1-q)}}\right),
\end{eqnarray}
if one consider $\mathbf{X}_0=\mathbf{0}$ almost surely. The probability distribution of $\mathbf{X}_t$,
\begin{eqnarray}\label{eq:30}
p(\mathbf{x},t)=\left(\frac{d}{\mathsf{2}_{qd}\pi_{qd}\sigma^2(t)}\right)^\frac{d}{2}\exp_q\left(-\frac{d\mathbf{x}^2}{\mathsf{2}_{qd}\sigma^2(t)}\right),
\end{eqnarray}
where
\begin{eqnarray}\label{eq:31}
\sigma^2(t):=\varepsilon_{qd}(2Dt)^{\frac{2}{2+d(1-q)}},
\end{eqnarray}
is a solution of Eq. (\ref{eq:27}) with initial condition $p(\mathbf{x},0)=\delta(\mathbf{x})$, where $\delta$ denotes the Dirac's delta function.

The power-law dependence of $\sigma^2$ on $t$, $\sigma^2(t)\propto t^\alpha$, is the main indicator of an anomalous diffusive behaviour. Indeed, normal diffusion processes are characterized by a linear dependence of the variance on $t$, that is $\alpha=1$. This induces a natural classification of anomalous diffusion according to $\alpha$: \textit{superdiffusion} corresponding to $\alpha>1$, and \textit{subdiffusion} corresponding to $\alpha<1$. In other words, $\alpha$ may be understood as a "diffusibility quantifier". From (\ref{eq:31}), we conclude that a superdiffusive behaviour will be observed whenever $q>1$, while a subdiffusion will be observed for $q<1$.

\subsection*{Nonextensive Ornstein-Uhlenbeck Process: $\mathbf{F}=-\gamma \mathbf{x}$}

Let us consider now that the drift force has a linear dependence on $\mathbf{x}$, \textit{i.e.} $\mathbf{F}=-\gamma \mathbf{x}$, where $\gamma>0$. In this case, the associated nonlinear FPE is a nonlinear power-law version of the Smoluchowski equation, that is
\begin{eqnarray}\label{eq:32}
\frac{\partial p}{\partial t}=\gamma\mathrm{Div}(\mathbf{x}p)+D\Delta p^\nu.
\end{eqnarray}
The corresponding microscopic dynamics is governed by the following stochastic differential equation:
\begin{eqnarray}\label{eq:33}
\dd \mathbf{X}_t=-\gamma \mathbf{X}_t\dd t+(2D)^{\frac{2}{2+d(1-q)}}\dd \mathbf{P}_t^q.
\end{eqnarray}
As in the free-particle case, it is trivial to find the solution of the above equation. Indeed, with straightforward manipulations and considering $\mathbf{X}_0=\mathbf{0}$ (almost surely), it follows that
\begin{eqnarray}\label{eq:34}
\mathbf{X}_t=\int_0^t (2D)^{\frac{2}{2+d(1-q)}}e^{-\gamma(t-t')}\dd \mathbf{P}_t^q.
\end{eqnarray}
Applying theorem \ref{thm:1} to the expression above we conclude that
\begin{eqnarray}\label{eq:35}
\mathbf{X}_t\sim \mathcal{N}_q(\mathbf{0},\sigma^2(t)),
\end{eqnarray}
where
\begin{align}\label{eq:36}
\sigma^2(t)&=\varepsilon_{qd}\left\{2D\int_{0}^{t}\exp\left[-(2+d(1-q))\gamma (t-t')\right]\dd t'\right\}^{\frac{2}{2+d(1-q)}} \nonumber\\ &=\varepsilon_{qd}\left\{2D\frac{1-e^{-[2+d(1-q)]\gamma t}}{\gamma[2+d(1-q)]}\right\}^{\frac{2}{2+d(1-q)}}.
\end{align}

The process $\mathbf{X}_t$ may be understood as a nonextensive generalization of the Ornstein-Uhlenbeck Process \cite{Gardiner,OU30}. A remarkable property of this system is that it has a well-defined steady state, which corresponds to the situation in which the stochastic forces are balanced with the linear drift force. The variance in the steady state is
\begin{eqnarray}\label{eq:37}
\sigma^2_{\textrm{eq}}:=\lim_{t\to +\infty}\sigma^2(t)=\varepsilon_{qd}\left\{\frac{2D}{\gamma[2+d(1-q)]}\right\}^{\frac{2}{2+d(1-q)}}.
\end{eqnarray}
For $t$ small, on the other hand, the system behaves like a free-particle. In fact, considering $t$ small in (\ref{eq:36}) we have
\begin{eqnarray}\label{eq:38}
\sigma^2(t)\approx \varepsilon_{qd}(2Dt)^{\frac{2}{2+d(1-q)}}.
\end{eqnarray}

\begin{figure}
\centering
\includegraphics{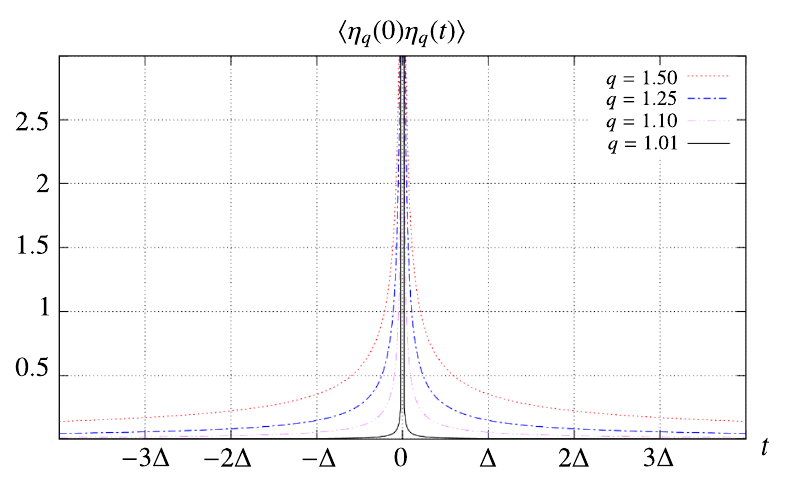}
\caption{Temporal correlations with $q>1$ ($\Delta=0.5$) [Eq.(\ref{eq:45})].}
\label{fig:1}
\end{figure}

\subsection*{Nonextensive Colored Noise}

The effective microscopic dynamics of normal diffusion processes are characterized by stochastic forces that do not exhibit temporal correlations (or memory). So, the influence of such a force at a time $t$ does not depend (or influence) the action at another time $t'\neq t$. This behavior is well modeled by the Gaussian white noise $\eta$, which is heuristically related to the Wiener process as follows
\begin{eqnarray}
\eta(t)\dd t=\dd W_t.
\end{eqnarray}
A well-known fact about $\eta$ (which may be verified by the expression above) is that it has delta-correlation, \textit{i.e.}
\begin{eqnarray}
\langle \eta(t)\eta(t')\rangle =\delta(t-t'),
\end{eqnarray}
which models its (temporal) uncorrelated action.

Herein we are interested on the nonextensive analogue of $\eta$, that is a noise $\eta_q$ such that
\begin{eqnarray}
\eta_q(t)\dd t=\dd \mathrm{P}_t^q,
\end{eqnarray}
where $\mathrm{P}_t^q$ is the one-dimensional generalization of the Wiener process (see definition \ref{defn:1}). The correlators $\langle \eta_q(t)\eta_q(t')\rangle$ may be computed by using the definition of $\mathrm{P}_t^q$. From
\begin{eqnarray}
\mathbb{E}[XY]=\frac{\var[X]+\var[Y]-\var[X-Y]}{2}-\mathbb{E}[X]\mathbb{E}[Y],
\end{eqnarray}
the definition of $\mathrm{P}_t^q$,
\begin{eqnarray}
\var[\mathrm{P}_t^q]=\varepsilon_{q1}t^{\frac{2}{3-q}},
\end{eqnarray}
and $\mathbb{E}[\mathrm{P}_t^q]=0$ (which follows from the fact that one-dimensional $q$-Gaussian distributions are even functions whenever the $q$-expectation value is zero), it follows that
\begin{eqnarray}
\mathbb{E}[\textrm{P}_{t_1}^q\textrm{P}_{t_2}^q]=\frac{\varepsilon_{q1}}{2}\left[t_2^{\frac{2}{3-q}}+t_1^{\frac{2}{3-q}}-(t_2-t_1)^{\frac{2}{3-q}}\right].
\end{eqnarray}
Since $\eta_q(t)\dd t=\dd \mathrm{P}_t^q$, then
\begin{eqnarray*}
\int_0^{t_1}\int_0^{t_2}\langle\eta_q(t')\eta_q(t'')\rangle\dd t'\dd t''=\frac{\varepsilon_{q1}}{2}\left[t_2^{\frac{2}{3-q}}+t_1^{\frac{2}{3-q}}-(t_2-t_1)^{\frac{2}{3-q}}\right],
\end{eqnarray*}
which implies that
\begin{eqnarray}\label{eq:45}
\langle\eta_q(t')\eta_q(t'')\rangle=\varepsilon_{q1}\frac{(q-1)}{(3-q)^2}|t'-t''|^{\frac{2q-4}{3-q}}.
\end{eqnarray}
In the extensive limit we have (see Fig. \ref{fig:1})
\begin{eqnarray}
\lim_{q\to 1^\pm } \langle\eta_q(t')\eta_q(t'')\rangle=\pm\delta(t'-t'').
\end{eqnarray}

From (\ref{eq:45}) we conclude that the stochastic force associated with nonlinear FPEs exhibit memory. These memory effects may explain the anomalous diffusive behavior discussed earlier. Indeed, the correlators $\langle \eta(t+\Delta)\eta(t)\rangle$, $\Delta>0$, are positive if, and only if, $q>1$. So, the action of the stochastic force at an instant $t+\Delta$ is positively correlated with the action at a previous instant $t$, tending to amplify it. Hence, we expect that the diffusive behavior in this case to be greater than the uncorrelated case (which corresponds to normal diffusion with $q=1$). In other words, for $q>1$ we expect a superdiffusive behavior, which agrees with (\ref{eq:31}). A similar reasoning applies for subdiffusion ($q<1$).

\subsection*{Thermal Noise in Electric Circuits}

\begin{figure}
    \centering
    \includegraphics[scale=0.6]{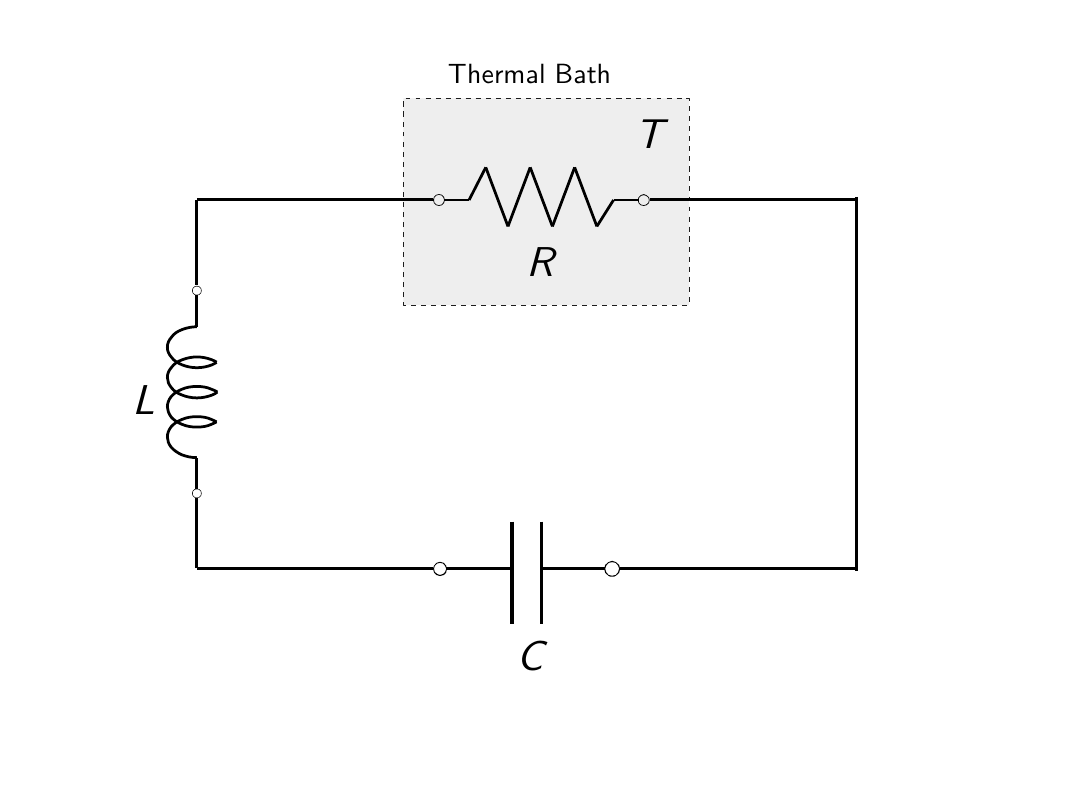}
    \caption{RLC circuit where the resistor $R$ is coupled to a thermal bath at a temperature $T$.}
    \label{fig:2}
\end{figure}

Finally, we will consider the problem of modeling thermal noise in electric circuits. For this, consider an RLC-circuit in series, where the resistor is coupled with a thermal bath at temperature $T$ (see Fig. \ref{fig:2}). In our approach, we will assume a thermal noise modeled by $\eta_q$. The noise-free RLC-circuit is described by the following equations:
\begin{eqnarray}\label{eq:47}
\frac{\dd^2 Q(t)}{\dd t^2}+2\beta\frac{\dd Q(t)}{\dd t}+\omega_0^2 Q(t)=0,
\end{eqnarray}
and
\begin{eqnarray}\label{eq:48}
I(t)=\frac{\dd Q(t)}{\dd t},
\end{eqnarray}
where $Q$ denotes the electric charge, $I$ the electric current, $R$ the resistance, $C$ the capacitance, $L$ the inductance, $\omega_0^2:=1/RL$, and $\beta:=R/2L$.

The dynamics of the circuit with the introduction of the nonextensive thermal noise is governed by two stochastic differential equations:
\begin{eqnarray}\label{eq:49}
\dd I_t=-\omega_0^2 Q_t\dd t-2\beta I_t\dd t+(2D)^{\frac{1}{3-q}}\dd\textrm{P}_{t}^q,
\end{eqnarray}
and
\begin{eqnarray}\label{eq:50}
\dd {Q}_t={I}_t\dd t.
\end{eqnarray}
A strategy to solve the above equations is to define the matrices:
\begin{eqnarray*}
\Lambda=
\begin{bmatrix}
-2\beta & -\omega_0^2 \\ 1 & 0
\end{bmatrix}\hspace{0.25cm}\textrm{and}\hspace{0.25cm} Z_t=\begin{bmatrix}
I_t \\ Q_t
\end{bmatrix},
\end{eqnarray*}
in such a way that (\ref{eq:49}--\ref{eq:50}) reads
\begin{eqnarray}
\dd Z_t=\Lambda{Z}_t\dd t+(2D)^{\frac{1}{3-q}}E_I \dd\textrm{P}_{t}^{q},
\end{eqnarray}
where $E_I=(1,0)^T$ and $E_Q=(0,1)^T$ ($T$ means transposition). With straightforward manipulations one may verify that the solution of the above equation is
\begin{eqnarray}\label{eq:52}
Z_t=e^{\Lambda t}z_0+ (2D)^{\frac{1}{3-q}}\int_0^t e^{(t-t')\Lambda}E_I \dd\textrm{P}_{t}^q,
\end{eqnarray}
where $z_0=(I(0),Q(0))^T$ denotes the initial condition.

\begin{figure}
    \centering
    \includegraphics{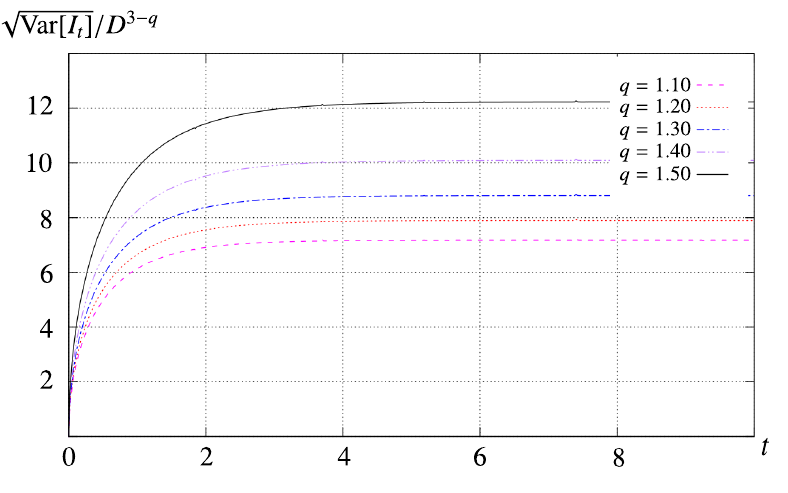}
    \caption{Thermal fluctuations in an RLC-circuit in series with $C=1.0$ mF, $L=50.0$ mH, and $R=1.15$ m$\Omega$.}
    \label{fig:3}
\end{figure}

From (\ref{eq:52}) we can compute all (stochastic) properties of the considered electric circuit. In particular, the current $I_t$ reads
\begin{align}\label{eq:53}
I_t &=E_I^T Z_t \nonumber \\&=E_I^T e^{t\Lambda}z_0+(2D)^{\frac{1}{3-q}}\int_0^t E_I^T e^{(t-t')\Lambda}E_I\dd \textrm{P}_t^q.
\end{align}
We can compute the integral on the right-hand side by diagonalizing the matrix $\exp[(t-t')\Lambda]$ and then applying theorem \ref{thm:1}. The eigenvalues of $\exp[(t-t')\Lambda]$ are $e^{(t-t')(\beta\pm \ii \omega)}$, where $\omega=\sqrt{\beta^2-\omega_0^2}$. From theorem \ref{thm:1}, the variance of $I_t$ is
\begin{widetext}
\begin{eqnarray}
\var[I_t]=\varepsilon_{q1}\left\{\frac{2D}{\omega}\int_0^t \left[\omega\cos\left(\omega(t'-t)\right)+\beta\sin\left(\omega(t'-t)\right)\right]^{3-q}e^{(3-q)\beta(t'-t)}\dd t'\right\}^{\frac{2}{3-q}}.
\end{eqnarray}
\end{widetext}
To understand the behavior of thermal fluctuation in the circuit, we can consider some particular values for the circuit parameters. For instance consider
\begin{eqnarray*}
C=1.0 \textrm{ mF,}\hspace{1cm} L=50.0\textrm{ mH,}\hspace{0.5cm} \textrm{and} \hspace{0.5cm} R=1.15 \textrm{ m$\Omega$}.
\end{eqnarray*}
For these values, the circuit has a sub-critical damping. In Fig. \ref{fig:3} we have the behavior of the variance as a function of time (in minutes). The fluctuations become stationary for times longer than approximately $5$ min, which is due to the fact that the current is approximately zero. The nonextensivity, as we see in Fig. \ref{fig:3}, increases the amplitude of the thermal oscillations for $q>1$. As in superdiffusion, such a phenomenon may be understood as an effect of memory in the nonextensive noise $\eta_q$: the action of such a noise at a time $t+\Delta$, $\Delta>0$, is positively correlated with the action at a time $t$. Hence, the amplitude of the thermal fluctuations will become larger as $q$ increases.

\newpage
\section{Discussion}\label{sec:IV}

In the present paper we proposed a formalism to describe the effective microscopic dynamics of power-law nonlinear Fokker-Planck equations. The formalism is based on the nonextensive generalization of the Wiener process (see Def. \ref{defn:1}). We have demonstrated that, with an adequate generalization of It\^{o}'s integral (see Def. \ref{defn:2}), the associated equation of motion can be trivially solved for important cases, namely with $\mathbf{F}=\mathbf{0}$ (Brownian free-particle) and $\mathbf{F}=-\gamma \mathbf{x}$ (Ornstein-Uhlenbeck process). The proposed formalism also provides an explanation for anomalous diffusive behaviours. In particular, super(sub)-diffusion may be understood in terms of memory effects in the effective stochastic force. We also showed that such a formalism can be easily applied in modeling thermal noise in electric circuits.

From these results we conclude that the proposed formalism is an important tool for a better understanding of nonequilibrium phenomena in nonextensive statistical mechanics. The $q$-Wiener process, in turn, is a new element on the list of $q$-generalizations. 

Although we have shed some light on important questions about nonextensive statistical mechanics and nonlinear Fokker-Planck equations, the results presented in this contribution leave several questions open. Among such issues, we highlight the rigorous construction of the process $\mathbf{P}_t^q$ and the connection between Eq. (\ref{eq:23}) and nonlinear FPE (\ref{eq:1}). We believe that the former can be solved by using the $q$-central limit theorem proposed in Ref. \cite{UTS08}, while the latter requires an $q$-generalization of It\^{o}'s lemma. 

\newpage
\begin{acknowledgments}
The author thanks Gabriel Landi, Constantino Tsallis, Jorge Anderson Ramos, M\'{a}rcio Bortoloti, Luizdarcy Castro, Thiago Mergulh\~{a}o, and Rafael Wagner for the comments and suggestions. This work was supported by Conselho Nacional de Desenvolvimento Cient\'{i}fico e Tecnol\'{o}gico (CNPq).
\end{acknowledgments}

\newpage
\textbf{ }
\newpage

\appendix

\begin{widetext}
\section{Proof of Theorem 1}\label{Appendix:A}

\noindent \textbf{Theorem 1.} \textit{Let $f\in\mathsf{L^p}([0,t])$ be a real function, and let $\mathsf{p}=2+d(1-q)$, $q<\min\{(1+d)/d,(d+2)/(d+4)\}$. The $q$-It\^{o} integral of $f$,}
\begin{eqnarray}
\mathbf{I}_t=\int_0^t f(t')\dd {\mathbf{P}}_{t'}^q,
\end{eqnarray}
\textit{is a $q$-Gaussian stochastic process with the following properties:}
\begin{eqnarray}
\mathbb{E}_q[\mathbf{I}_t]=\mathbf{0}, \hspace{0.5cm} \textit{and} \hspace{0.5cm} \var[\mathbf{I}_t]=\varepsilon_{qd}||f||_{\textrm{$\displaystyle{\mathsf{L^p}([0,t])}$}}^2,
\end{eqnarray}
\textit{that is, $\mathbf{I}_t\sim\mathcal{N}_q\left(0,\varepsilon_{qd}||f||_{\textrm{$\displaystyle{\mathsf{L^p}([0,t])}$}}^{2}\right)$.}

\begin{proof}
Let $N\in\mathbb{Z}^+$, $N>1$, and consider the partial sum $\mathbf{S}_N$,
\begin{eqnarray}\label{eq:A3}
\mathbf{S}_N:=\sum_{k=1}^N f(t_k)\left(\mathbf{P}_{t_{k+1}}^q-\mathbf{P}_{t_k}^q\right)=\sum_{k=1}^N f(t_k)\Delta \mathbf{P}_{t_k}^q,
\end{eqnarray}
where $t_N=t$. Since each term of the sum in the right-hand side has the $q$-expectation value equal to zero, then, from the linearity of $\mathbb{E}_q$, it follows that $\mathbb{E}_q[\mathbf{S}_N]=\mathbf{0}$.

The increments of $\mathbf{P}_t^q$ are $q$-independent random variables (see Def. \ref{defn:1}). So, from (\ref{eq:17}), the $q$-characteristic function of $\mathbf{S}_N$ must satisfy the following condition:
\begin{align}\label{eq:A4}
F_q[\mathbf{S}_N](\mathbf{k})=F_q\left[\sum_{k=1}^N f(t_k)\Delta \mathbf{P}_{t_k}^q\right](\mathbf{k})=F_q[f(t_1)\Delta \mathbf{P}_{t_1}^q]\otimes_{\tilde{q}}\dots \otimes_{\tilde{q}}F_q[f(t_N)\Delta \mathbf{P}_{t_N}^q]
\end{align}
where $\tilde{q}=[2+(d-2)(1-q)]/[2+d(1-q)]$. Each term $F_q[f(t_k)\Delta \mathbf{P}_{t_k}^q](\mathbf{k})$ on the right-hand side of (\ref{eq:A4}) is a $\tilde{q}$-Gaussian distribution (up to normalization), given by (\ref{eq:16}). Hence, $F_q[\mathbf{S}_N](\mathbf{k})$ is $\tilde{q}$-Gaussian of the form:
\begin{align}
F_q[\boldsymbol{S}_N](\mathbf{k})&=\exp_{\tilde{q}}\left\{\frac{\mathsf{2}_{qd}[2+d(1-q)]}{8d(\mathsf{2}_{qd}\pi_{qd}/d)^{d(q-1)}}  \sum_{k=1}^N\left[\var\left(f(t_k)\Delta\mathbf{P}_{t}^q\right)\right]^{1-d(q-1)} |\mathbf{k}|^2 \right\} \nonumber \\
&=\exp_{\tilde{q}}\left\{\frac{\mathsf{2}_{qd}[2+d(1-q)]}{8d(\mathsf{2}_{qd}\pi_{qd}/d)^{d(q-1)}}\sum_{k=1}^N\left[\varepsilon_{qd}|f(t_k)|^2(\Delta t_k)^{\frac{2}{2+d(1-q)}}\right]^{1-d(q-1)} |\mathbf{k}|^2 \right\} \nonumber\\ &=\exp_{\tilde{q}}\left\{\frac{\mathsf{2}_{qd}[2+d(1-q)]}{8d(\mathsf{2}_{qd}\pi_{qd}/d)^{d(q-1)}}\sum_{k=1}^N\left[(\varepsilon_{qd})^{\frac{2+d(1-q)}{2}}|f(t_k)|^{{2+d(1-q)}}\Delta t_k\right]^{\frac{2[1-d(q-1)]}{2+d(1-q)}} |\mathbf{k}|^2 \right\}\nonumber \\ &=\exp_{\tilde{q}}\left\{\frac{\mathsf{2}_{qd}[2+d(1-q)]}{8d(\mathsf{2}_{qd}\pi_{qd}/d)^{d(q-1)}}\sum_{k=1}^N\left[(\varepsilon_{qd})^{\frac{\mathsf{p}}{2}}|f(t_k)|^{\mathsf{p}}\Delta t_k\right]^{\frac{2[1-d(q-1)]}{\mathsf{p}}} |\mathbf{k}|^2 \right\},
\end{align}
with $\mathsf{p}=2+d(1-q)$. Comparing the above equation with (\ref{eq:16}), the variance of $\mathbf{S}_N$ reads
\begin{eqnarray}\label{eq:A6}
\var[\mathbf{S}_N]=\left\{\sum_{k=1}^N\left[(\varepsilon_{qd})^{\frac{\mathsf{p}}{2}}|f(t_k)|^{\mathsf{p}}\Delta t_k\right]^{\frac{2[1-d(q-1)]}{\mathsf{p}}}\right\}^{\frac{1}{1-d(q-1)}}.
\end{eqnarray}
Without loss of generality, consider $\Delta t_k=1/N$. Since $1+d(1-q)>0$ (because $q<(1+d)/d$), then
\begin{align}
\var[\mathbf{I}_t]=\lim_{N\to \infty} \var[\mathbf{S}_N]&=\lim_{N\to \infty} \left\{\sum_{k=1}^N\left[(\varepsilon_{qd})^{\frac{\mathsf{p}}{2}}|f(t_k)|^{\mathsf{p}}\Delta t_k\right]^{\frac{2[1-d(q-1)]}{\mathsf{p}}}\right\}^{\frac{1}{1-d(q-1)}}\nonumber\\ &=\lim_{N\to \infty}\varepsilon_{qd}\left\{\sum_{k=1}^N |f(t_k)|^\mathsf{p}\Delta t_k\right\}^{\frac{2}{\mathsf{p}}} \nonumber \\ &=\varepsilon_{qd}\left\{\lim_{N\to \infty}\sum_{k=1}^N |f(t_k)|^\mathsf{p}\Delta t_k\right\}^{\frac{2}{\mathsf{p}}} \nonumber \\ &=\varepsilon_{qd}\left(\int_0^t |f(\tau)|^{\mathsf{p}}\dd \tau\right)^{\frac{2}{\mathsf{p}}},
\end{align}
which proves the theorem. 
\end{proof}
\newpage
\end{widetext}


\begin{thebibliography}{99}

\bibitem{Risken96}
H. Risken. \textit{The Fokker-Planck Equation: Methods of solution and aplications}. Springer, Berlin, Heidelberg (1996).

\bibitem{Einstein}
A. Einstein, \href{https://doi.org/10.1002/andp.19053220806}{ Ann. Phys. (Berlin, Ger.) \textbf{322}, 549 (1905)}

\bibitem{Aronson}
D. G. Aronson, \textit{Nonlinear Diffusion Problems} (Springer, Berlin,
1986), pp. 1–46.

\bibitem{RNC12}
M. S. Ribeiro, F. D. Nobre, and E. M. F. Curado,
\href{https://doi.org/10.1103/PhysRevE.85.021146}{Phys. Rev. E \textbf{85}, 021146 (2012).}

\bibitem{CRSA15}
G. Combe, V. Richefeu, M. Stasiak, and A. P. F. Atman,
\href{https://doi.org/10.1103/PhysRevLett.115.238301}{Phys. Rev. Lett. \textbf{115}, 238301 (2015).}

\bibitem{Shiino03}
M. Shiino, 
\href{https://doi.org/10.1103/PhysRevE.67.056118}{Phys. Rev. E \textbf{67}, 056118 (2003).}

\bibitem{Chavanis03}
P. H. Chavanis, 
\href{https://doi.org/10.1103/PhysRevE.68.036108}{Phys. Rev. E \textbf{68}, 036108 (2003).}

\bibitem{Frank05}
T. D. Frank, \textit{Nonlinear Fokker-Planck equations: fundamentals and applications}. Springer Science \& Business Media (2005).

\bibitem{PP95}
A. R. Plastino and A. Plastino, 
\href{https://doi.org/10.1016/0378-4371(95)00211-1}{Physica A (Amsterdam) 222, 347 (1995).}

\bibitem{TB96}
C. Tsallis, and D.J. Bukman, \href{https://doi.org/10.1103/physreve.54.r2197}{Phys. Rev. E \textbf{54}, R2197 (1996).}

\bibitem{Tsallis88}
C. Tsallis, \href{https://doi.org/10.1007/BF01016429}{J. Stat. Phys. \textbf{52}, 479 (1988).}

\bibitem{Tsallisbook}
C. Tsallis, \textit{Introduction to Nonextensive Statistical Mechanics – Approaching a Complex World}, Springer, New York (2009).

\bibitem{RCN15}
M. S. Ribeiro, G. A. Casas, and F. D. Nobre,
\href{https://doi.org/10.1103/PhysRevE.91.012140}{Phys. Rev. E \textbf{91}, 012140 (2015).}

\bibitem{casasthese}
G. Casas, \textit{Produção de Entropia em Sistemas Complexos}, PhD thesis, Centro Brasileiro de Pesquisas F\'{i}sicas, Brazil, 2015. Available at: \href{http://cbpfindex.cbpf.br/publication_pdfs/teseGabriela.2015_04_14_17_01_04.pdf}{teseGabriela.2015\_04\_14\_17\_01\_04.pdf}

\bibitem{WPT16}
R. S. Wedemann, A. R. Plastino, and C. Tsallis,
\href{https://doi.org/10.1103/PhysRevE.94.062105}{Phys. Rev. E \textbf{94}, 062105 (2016).}

\bibitem{PWCNT18}
A. R. Plastino, R. S. Wedemann, E. M. F. Curado, F. D. Nobre, and C. Tsallis,
\href{https://doi.org/10.1103/PhysRevE.98.012129}{Phys. Rev. E \textbf{98}, 012129 (2018).}

\bibitem{SNC07}
V. Schwammle, F. D. Nobre, and E. M. F. Curado, \href{https://doi.org/10.1103/PhysRevE.76.041123}{Phys. Rev. E
76, 041123 (2007).}

\bibitem{SRT16}
G. Sicuro, P. Rapcan, and C. Tsallis,
\href{https://doi.org/10.1103/PhysRevE.94.062117}{Phys. Rev. E \textbf{94}, 062117 (2016).}

\bibitem{Borges98}
E.P. Borges,
\href{https://doi.org/10.1088/0305-4470/31/23/011}{J. Phys. A \textbf{31}, 5281 (1998).}

\bibitem{Yamano02}
T. Yamano, 
\href{https://doi.org/10.1016/S0378-4371(01)00567-2}{Physica A \textbf{305}, 486 (2002).}

\bibitem{UTS08}
S. Umarov, C. Tsallis, and S. Steinberg, \href{https://doi.org/10.1007/s00032-008-0087-y}{Milan J. Math. \textbf{76}, 307 (2008).}

\bibitem{JT10}
M. Jauregui, and C. Tsallis,
\href{https://doi.org/10.1063/1.3431981}{J. Math. Phys. \textbf{51}, 063304 (2010).} 

\bibitem{ST17}
G. Sicuro, and C. Tsallis,
\href{https://doi.org/10.1016/j.physleta.2017.06.006}{Phys. Lett. A \textbf{381}, 2583 (2017).}

\bibitem{TO15}
T. Tomé, and M.J. Oliveira. \textit{Stochastic dynamics and irreversibility}. Heidelberg: Springer (2015).

\bibitem{Borland98}
L. Borland, \href{https://doi.org/10.1103/PhysRevE.57.6634}{Phys. Rev. E \textbf{57}, 6634 (1998).}

\bibitem{Hilhorst10}
H. J. Hilhorst, 
\href{https://doi.org/10.1088/1742-5468/2010/10/P10023}{J. Stat. Mech. P10023 (2010).}

\bibitem{Gardiner}
C. Gardiner, 
\textit{Handbook of Stochastic Methods for Physics,
Chemistry, and the Natural Sciences}, Springer Series in Synergetics (Springer, New York, 2004).

\bibitem{Tsallis99}
C. Tsallis, 
\href{https://doi.org/10.1590/S0103-97331999000100002}{Braz. J. Phys. \textbf{29}, 1 (1999).}

\bibitem{handbooknonlinear}
A.D. Polyanin, and V.F. Zaitsev, Handbook of Nonlinear partial diﬀerential equations. Chapman \& Hall/CRC (2004).

\bibitem{OU30}
G. E. Uhlenbeck, and L. S. Ornstein
\href{https://doi.org/10.1103/PhysRev.36.823}{Phys. Rev. \textbf{36}, 823 (1930).}

\end{thebibliography}
\end{document}